





\documentclass[sn-mathphys]{sn-jnl}



\jyear{2021}%

\theoremstyle{thmstyleone}%
\newtheorem{theorem}{Theorem}
%
\usepackage[euler]{textgreek}
\usepackage{amscd,amssymb,amsmath,amsthm}
\usepackage{xcolor}
\usepackage{epstopdf}
\theoremstyle{thmstyletwo}%
\newtheorem{remark}{Remark}%

\theoremstyle{thmstylethree}%

\raggedbottom

\begin{document}

\title[The phase transition for the three-state SOS model...]{The phase transition for the three-state SOS model with one-level competing interactions on the binary tree}


\author[1,2,3]{\fnm{Muzaffar M.} \sur{Rahmatullaev}}\email{mrahmatullaev@rambler.ru}

\author*[4]{\fnm{Obid Sh.} \sur{Karshiboev}}\email{okarshiboevsher@mail.ru}
\equalcont{These authors contributed equally to this work.}

\affil[1]{\orgname{Institute of mathematics after named V.I.Romanovsky}, \orgaddress{\street{University street}, \postcode{100174}, \city{Tashkent}, \country{Uzbekistan}}}

\affil[2]{\orgname{New Uzbekistan University}, \orgaddress{\street{Mustaqillik Ave., 54}, \postcode{100007}, \city{Tashkent}, \country{Uzbekistan}}}

\affil[3]{\orgname{Namangan state university}, \orgaddress{\street{Uychi street, 316}, \postcode{160136}, \city{Namangan}, \country{Uzbekistan}}}

\affil*[4]{\orgname{Chirchik state pedagogical university}, \orgaddress{\street{Amir Temur street}, \city{Chirchik}, \postcode{111702}, \state{Tashkent region}, \country{Uzbekistan}}}


\abstract{In this paper, we consider a three-state solid-on-solid (SOS) model with two competing interactions (nearest-neighbour, one-level next-nearest-neighbour) on the Cayley tree of order two. We show that at some values of parameters the model exhibits a phase transition. We prove that for the model under some conditions there is no antiferromagnetic phases.}

\keywords{Cayley tree, Gibbs measure, SOS model, competing interactions}


\pacs[Mathematics Subject Classification]{Primary 82B05 $\cdot$ 82B20; Secondary 60K35}

\maketitle
\section{Introduction}

The solid-on-solid (SOS) model on a Cayley tree is introduced in \cite{rs} as a generalization of the Ising model. Since then a great interest has been devoted to the investigation of various properties of SOS models on Cayley trees (see, e.g., \cite{Sib,KRSOS,shok,Bun,Pos,PT,tmph}). See also \cite{Ro} and references therein for more details about SOS models on trees.

In this paper, we study the phase transition phenomenon for the three-state SOS model on a Cayley tree of order two with nearest-neighbour and one-level next-nearest-neighbour interactions. Note that the phase transition problem is one of the central problems of statistical mechanics \cite{G}. The existence of more than one Gibbs measure to a given model implies the occurrence of the phase transition \cite{G,Ro}. For the classical models (the Ising, Potts models) of statistical mechanics on Cayley trees within radius two interactions this problem is well studied (for the Ising model see, e.g., \cite{Van,Yokoi,Trag,Katsura}, for the Potts model see, e.g., \cite{GTA,Tur,Gan06,Pah2}).

We obtain a functional equation for the model using the self-similarity of the Cayley tree. Here we consider only the one-level next-nearest-neighbor interactions, since studying both one-level and prolonged next-nearest-neighbor interactions simultaneously usually lead to functional equations which difficult to solve (this happens even for the Ising model, see, e.g., \cite{ganipah}). We prove that at some values of parameters the model possess multiple Gibbs measures which implies the existence of phase transition. We show that for the model under certain conditions there is not any antiferromagnetic phase.  We also provide a conjecture on the absence of antiferromagnetic phase for the model on the invariant set.

 The paper is organized as follows. In Section \ref{defin} we give definitions of the model, Cayley tree and Gibbs measures. In Section \ref{recur} we reduce the problem of describing limit Gibbs measures to the problem of solving a system of nonlinear functional equations. Section \ref{tigm} is devoted to describe the ferromagnetic phase of the model. In Section \ref{per} we study the antiferromagnetic phase of the model.

\section{Preliminaries}\label{defin}

\textbf{Cayley tree.} The Cayley tree $\Gamma^k$ of order $k\geq1$ is an infinite tree, i.e., a cycles-free graph such that from each vertex of which issues exactly $k+1$ edges. We denote by $V$ the set of the vertices of tree and by $L$ the set of edges of tree. Two vertices $x$ and $y$, where $x,y\in V$ are called nearest-neighbor if there exists an edge $l\in L$ connecting them, which is denoted by $l=\langle x,y \rangle$. The distance on this tree, denoted by $d(x,y)$, is defined as the number of nearest-neighbour pairs of the minimal path between the vertices $x$ and $y$ (where path is a collection of nearest-neighbor pairs, two consecutive pairs sharing at least a given vertex).

For a fixed $x^0\in V,$ called the root, we set
$$
W_n=\{x\in V\mid d(x,x^0)=n\},~~~V_n=\bigcup_{m=0}^n W_m
$$
and denote by
$$
S(x)=\{y\in W_{n+1}:d(x,y)=1\},~~x\in W_n
$$
the set of \emph{direct successors} of $x.$
We observe that, for any vertex $x\neq x^0,$ $x$ has $k$ direct successors and $x^0$ has $k+1.$ For the sake of simplicity, we put $\mid x\mid=d(x, x^0),~x\in V$. Two vertices $x,y\in V$ are called second nearest-neighbor if $d(x,y)=2$. The second nearest-neighbor vertices $x$ and $y$ are called prolonged second nearest-neighbor if $\mid x\mid\neq\mid y\mid$ and is denoted by $>\widetilde{x,y}<.$ The second nearest-neighbor vertices $x,y\in V$ that are not prolonged are called one-level next-nearest-neighbor since $\mid x\mid=\mid y\mid$ and are denoted by $>\overline{x,y}<.$

In this paper, we consider a semi-infinite Cayley $\Gamma^k$ of order $k\geq2,$ i.e. a cycles-free graph with $(k+1)$ edges issuing from each except $x^0$ and with $k$ edges issuing from the vertex $x^0$. According to well known theorems, this can be reconstituted as a Cayley tree \cite{G,Tur}.

In the SOS model, the spin variables $\sigma(x)$ take their values on the set $\Phi=\{0,1,2\}$ which are associated with each vertex of the tree $\Gamma^k.$ The SOS model with nearest-neighbour and one-level next-nearest-neighbor interactions is defined by the following Hamiltonian:
\begin{equation}\label{eq1}
H(\sigma)=-J\sum_{\langle x,y \rangle }\mid\sigma(x)-\sigma(y)\mid-J_1\sum_{> \overline{x,y} < }\mid\sigma(x)-\sigma(y)\mid,
\end{equation}
where the sum in the first term ranges all nearest neighbors, second sum ranges all one-level next-nearest-neighbors, and $J,~J_1\in \mathbb{R}$ are the coupling constants (see Fig 1.).

\begin{center}
\includegraphics[width=12cm,height=5cm]{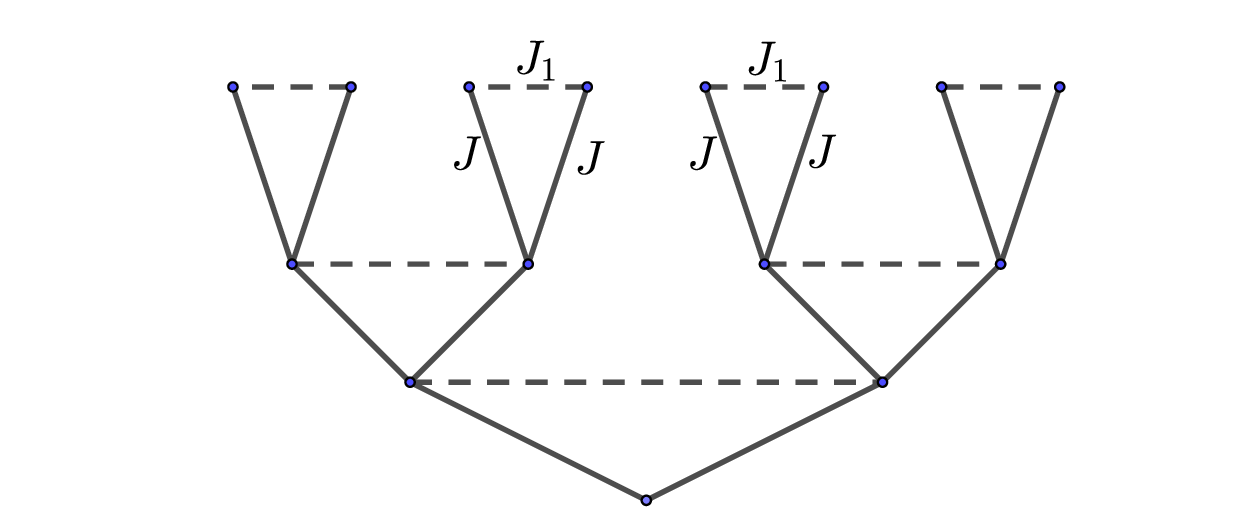}\label{Fig1}
\end{center}
\begin{center}{\footnotesize \noindent
 Figure 1.  The Cayley tree of order two with nearest-neighbor (---------) and one-level next-nearest-neighbour (- - - - -) interactions.}\
\end{center}

\section{Recursive Equations}\label{recur}
One can obtain the nonlinear functional equations (tree recursion) describing limiting Gibbs measures for lattice models on Cayley tree in many ways. One approach is based on properties of Markov random fields on Bethe lattices (see, e.g., \cite{rs}). The second approach is based on recursive equations for partition functions (see, e.g., \cite{Tur}). Naturally, both approaches lead to the same equation (see, e.g., \cite{Ro}). Since the second approach more suitable for models with competing interactions, we follow this approach.

Let $\Lambda$ be a finite subset of $V$. We will denote by $\sigma(\Lambda)$ the restriction of $\sigma$ to $\Lambda.$ Let $\overline{\sigma}(V\backslash\Lambda)$ be a fixed boundary configuration. The total energy of $\sigma(\Lambda)$ under condition $\overline{\sigma}(V\backslash\Lambda)$ is defined as
\begin{equation}\label{eq2}
\begin{split}
H(\sigma(\Lambda)\mid\overline{\sigma}(V\backslash\Lambda))=-J\sum_{\langle x,y \rangle:x,y\in \Lambda }\mid\sigma(x)-\sigma(y)\mid
\\-J_1\sum_{> x,y <:x,y\in\Lambda }\mid\sigma(x)-\sigma(y)\mid-J\sum_{\langle x,y \rangle:x\in \Lambda, y\notin\Lambda }\mid\sigma(x)-\sigma(y)\mid,
\end{split}
\end{equation}
Then partition function $Z_\Lambda(\overline{\sigma}(V\backslash\Lambda))$ in volume $\Lambda$ boundary condition $\overline{\sigma}(V\backslash\Lambda)$ is defined as
\begin{equation}\label{eq3}
Z_\Lambda(\overline{\sigma}(V\backslash\Lambda))=\sum_{\sigma(\Lambda)\in\Omega(\Lambda)}\exp(-\beta H_{\Lambda}(\sigma(\Lambda)\mid\overline{\sigma}(V\backslash\Lambda))),
\end{equation}
where $\Omega(\Lambda)$ is the set of all configurations in volume $\Lambda$ and $\beta=\frac{1}{T}$ is the inverse temperature. Then conditional Gibbs measure $\mu_{\Lambda}$ of a configuration $\sigma(\Lambda)$ is defined as
\[
\mu_{\Lambda}(\sigma(\Lambda)\mid\overline{\sigma}(V\backslash\Lambda))=\frac{\exp(-\beta\,H(\sigma(\Lambda)\mid\overline{\sigma}(V\backslash\Lambda)))}{Z_{\Lambda}(\overline{\sigma}(V\backslash\Lambda))}.
\]

We consider the configuration $\sigma(V_n)$, the partitions functions $Z_{V_n}$ and conditional
Gibbs measure $\mu_{\Lambda}(\sigma(\Lambda)\mid\overline{\sigma}(V\backslash\Lambda))$ in volume $V_n$ and for the sake of simplicity, we denote them by $\sigma_n$, $Z^{(n)}$ and $\mu_n,$ respectively. The partitions function $Z^{(n)}$ can be decomposed into following summands:
\begin{equation}\label{eq4}
Z^{(n)}= Z_0^{(n)}+Z_1^{(n)}+Z_2^{(n)},
\end{equation}
where
\begin{equation}\label{eq5}
Z_i^{(n)}=\sum_{\sigma_n\in\Omega(V_n):\sigma(x^0)=i}\exp(-\beta H_{V_n}(\sigma\mid\overline{\sigma}(V\backslash V_n))),~i=0,1,2.
\end{equation}
From now on, we restrict ourselves to the case $k=2.$

Denote $\theta=\exp(\beta J),~\theta_1=\exp(\beta J_1).$ Let $S(x^0)=\{x^1,~x^2\}.$ If $\sigma(x^0)=i,~\sigma(x^1)=j$ and $\sigma(x^2)=m,$ then from \eqref{eq2} and \eqref{eq3} we have following
$$
Z_i^{(n)}=\sum_{j,m=0}^2\exp(\beta J \mid i-j\mid+\beta J \mid i-m\mid +\beta J_1 \mid j-m\mid)Z_j^{n-1}Z_m^{(n-1)},
$$
so that
$$
Z_0^{(n)}=\Big[\big(Z_0^{(n-1)}\big)^2+2\theta\theta_1Z_0^{(n-1)}Z_1^{(n-1)}+2\theta^2\theta_1^2Z_0^{(n-1)}Z_2^{(n-1)}$$$$
+\theta^2\big(Z_1^{(n-1)}\big)^2+2\theta^3\theta_1Z_1^{(n-1)}Z_2^{(n-1)}+\theta^4\big(Z_2^{(n-1)}\big)^2\Big],
$$
$$
Z_1^{(n)}=\Big[\theta^2\big(Z_0^{n-1}\big)^2+2\theta\theta_1 Z_0^{(n-1)}Z_1^{(n-1)}+2\theta^2\theta_1^2Z_0^{(n-1)}Z_2^{(n-1)}$$
$$+\big(Z_1^{(n-1)}\big)^2+2\theta\theta_1Z_1^{(n-1)}Z_2^{(n-1)}+\theta^2\big(Z_2^{(n-1)}\big)\Big],
$$
$$
Z_2^{(n)}=\Big[\theta^4\big(Z_0^{(n-1)}\big)^2+2\theta^3\theta_1Z_0^{(n-1)}Z_1^{(n-1)}+2\theta^2\theta_1^2Z_0^{(n-1)}Z_2^{(n-1)}$$
$$+\theta^2\big(Z_1^{(n-1)}\big)^2+2\theta\theta_1Z_1^{(n-1)}Z_1^{(n-1)}+\big(Z_2^{n-1}\big)^2\Big].
$$
Introducing the notations $u_n(x^0)=\frac{Z_1^{(n)}(x^0)}{Z_0^{(n)}(x^0)},$ $v_n=\frac{Z_2^{(n)}(x^0)}{Z_0^{(n)}(x^0)},$ we obtain the following system of recurrent equations:
\begin{equation}\label{eq61}
\left\{%
\begin{array}{ll}
u_n=\frac{\theta^2+2\theta\theta_1u_{n-1}+2\theta^2\theta_1^2v_{n-1}+u_{n-1}^2+2\theta\theta_1u_{n-1}v_{n-1}+\theta^2v_{n-1}^2}{1+2\theta\theta_1u_{n-1}+2\theta^2\theta_1^2v_{n-1}+\theta^2u_{n-1}^2+2\theta^3\theta_1u_{n-1}v_{n-1}+\theta^4v_{n-1}^2}\\
[0.6cm]
v_n=\frac{\theta^4+2\theta^3\theta_1u_{n-1}+2\theta^2\theta_1^2v_{n-1}+\theta^2 u_{n-1}^2+2\theta\theta_1u_{n-1}v_{n-1}+v_{n-1}^2}{1+2\theta\theta_1u_{n-1}+2\theta^2\theta_1^2v_{n-1}+\theta^2u_{n-1}^2+2\theta^3\theta_1u_{n-1}v_{n-1}+\theta^4v_{n-1}^2}.\\
\end{array}%
\right.\end{equation}

Evidently,
\[
u_n(x^0)=\frac{\mu_n(\sigma_n(x^0)=1)}{\mu_n(\sigma_n(x^0)=0)},~v_n(x^0)=\frac{\mu_n(\sigma_n(x^0)=2)}{\mu_n(\sigma_n(x^0)=0)}.
\]
If we can find the limit of $u_n(x^0)$ as $n$ tends to infinity, we will find the ratio for the probability of a $1$ to the probability of a $0$ at the root for the limiting Gibbs measure. Similarly, if we can find the limit of $v_n(x^0)$ as $n$ tends to infinity, we will find the ratio for the probability of a $2$ to the probability of a $0$ at the root for the limiting Gibbs measure. Thus, the fixed points of the equation \eqref{eq6} describe the translation-invariant limiting Gibbs measure of the model \eqref{eq1}.

If $u=\lim u_n$ and $v=\lim v_n$ then
\begin{equation}\label{eq6}
\left\{%
\begin{array}{ll}
    u=\frac{\theta^2+2\theta\theta_1u+2\theta^2\theta_1^2v+u^2+2\theta\theta_1uv+\theta^2v^2}{1+2\theta\theta_1u+2\theta^2\theta_1^2v+\theta^2u^2+2\theta^3\theta_1uv+\theta^4v^2}, \\[0.6cm]
    v=\frac{\theta^4+2\theta^3\theta_1u+2\theta^2\theta_1^2v+\theta^2 u^2+2\theta\theta_1uv+v^2}{1+2\theta\theta_1u+2\theta^2\theta_1^2v+\theta^2u^2+2\theta^3\theta_1uv+\theta^4v^2}.\\
\end{array}%
\right.\end{equation}
\begin{remark} The system \eqref{eq6} coincides with the classical result for SOS model (see, e.g., \cite{rs,KRSOS}) when $\theta_1=1$ $(J_1=0)$, i.e.
\begin{equation}\label{equ}
\left\{%
\begin{array}{ll}
    u=\big(\frac{u+\theta v+\theta}{\theta^2v+\theta u+1}\big)^2, \\
    v=\big(\frac{\theta u+v+\theta^2}{\theta^2v+\theta u+1}\big)^2.\\
\end{array}%
\right.\end{equation}\end{remark}
It is important to note that if there is more than one positive solution for the system \eqref{eq6}, then there is more than one the translation-invariant limiting Gibbs measure corresponding to these solutions. We say that a phase transition occurs for the model \eqref{eq1}, if the system \eqref{eq6} has more than one positive solution.
\section{Ferromagnetic phases}\label{tigm}

In this section, we investigate the phase transitions of the model.  We consider the dynamic system \eqref{eq61} and study its limiting behaviour. Let $x=(u,v)\in\mathbb{R}_+^2$ and the dynamic system $F:\mathbb{R}_+^2\to\mathbb{R}_+^2$ is defined by
\begin{equation}\label{eq7}
\left\{%
\begin{array}{ll}
    u'=\frac{\theta^2+2\theta\theta_1u+2\theta^2\theta_1^2v+u^2+2\theta\theta_1uv+\theta^2v^2}{1+2\theta\theta_1u+2\theta^2\theta_1^2v+\theta^2u^2+2\theta^3\theta_1uv+\theta^4v^2}, \\[0.6cm]
    v'=\frac{\theta^4+2\theta^3\theta_1u+2\theta^2\theta_1^2v+\theta^2 u^2+2\theta\theta_1uv+v^2}{1+2\theta\theta_1u+2\theta^2\theta_1^2v+\theta^2u^2+2\theta^3\theta_1uv+\theta^4v^2}.\\
\end{array}%
\right.\end{equation}

Then the recurrent equations \eqref{eq61} be can rewritten as $x^{(n+1)}=F(x^{(n)}),~n\geq0.$ Recall that the point $x$ is a periodic point of period $p$ if $F^p(x) = x,$ where $F^p(x)$ stands for $p$-fold composition of $F$ into itself, i.e., $F^p(x)=\underbrace{F(F(\ldots F(x))\ldots)}_p$. A point $x\in\mathbb{R}_+^2$ is called a fixed point for $F:\mathbb{R}_+^2\to \mathbb{R}_+^2$ if
$F(x) = x.$ (see for more details \cite[Chapter 1]{Dev} or \cite[Section 1]{billiard}).

To investigate the problem of phase transition in the class of ferromagnetic phases, we have to describe the fixed points of the map $F(x)=x.$ Let us describe fixed points of this dynamical system, i.e., solutions of equation $F(x)=x.$ It is obvious that the following set is invariant with respect to the operator $F$:
 \begin{equation}\label{I}
I=\{x=(u,v)\in \mathbb{R}^2:v=1\}.
\end{equation}
On the set $I$, the system of equations \eqref{eq6} reduces to

 \begin{equation}\label{eq8}
u=f(u)
\end{equation}
where
\begin{equation}\label{f}
f(u)=f(u,\theta,\theta_1):=\frac{u^2+4\theta\,\theta_1\,u+2\theta^2(\theta_1^2+1)}{\theta^2\,u^2+2\theta\,\theta_1(\theta^2+1)u+\theta^4+2\,\theta^2\theta_1^2+1}.
\end{equation}
It is easy to see that the function $f(u)$ defined in \eqref{f} is continuous, bounded with $f(0)>0$ and $f(+\infty)<+\infty.$
From properties of the function $f$ it follows that the function $f$ has at least one fixed point, say, $u^*.$ We have

\begin{theorem}\label{thm3}
For the SOS model with one-level second nearest-neighbour interactions on the binary tree on the set $I$ if the condition $f'(u^*)>1$ is satisfied, then there exists three distinct translation-invariant limiting Gibbs measures, i.e., the phase transition occurs.
\end{theorem}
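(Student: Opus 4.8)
The plan is to analyze the one-dimensional dynamics of $f$ on the invariant set $I=\{v=1\}$, reducing the phase-transition question to counting fixed points of $f$. Since we are told $f$ is continuous, bounded, with $f(0)>0$ and $f(+\infty)<+\infty$, the graph of $f$ starts above the diagonal at $u=0$ (because $f(0)>0=$ the diagonal value there) and eventually lies below the diagonal (since $f$ is bounded while the identity grows without bound). Hence $f$ has at least one fixed point $u^*$, as already noted. The heart of the matter is to convert the local condition $f'(u^*)>1$ into the existence of \emph{three} fixed points.

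First I would establish that $f$ is strictly increasing and concave-then-convex in the relevant sense — more precisely, I would study the sign of $g(u):=f(u)-u$. We have $g(0)=f(0)>0$ and $g(+\infty)=-\infty$, so $g$ changes sign an odd number of times (counting multiplicity), guaranteeing an odd number of fixed points. The condition $f'(u^*)>1$ means $g'(u^*)>0$ at the fixed point $u^*$; that is, $g$ is increasing as it crosses zero at $u^*$. Combined with $g(0)>0$ and $g(+\infty)<0$, a single crossing with $g'>0$ is impossible: if $g$ starts positive and ends negative, any crossing where $g$ is increasing must be flanked by a crossing on each side where $g$ is decreasing. This forces at least three sign changes, hence at least three fixed points $u_1<u^*<u_2$ (relabeling), and I would argue that the specific rational form of $f$ bounds the number of solutions above by three, so there are exactly three.

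The key technical step is the counting-by-degree argument: the fixed-point equation $f(u)=u$ clears denominators to a polynomial equation
\begin{equation}\label{cubic}
u\big(\theta^2 u^2+2\theta\theta_1(\theta^2+1)u+\theta^4+2\theta^2\theta_1^2+1\big)=u^2+4\theta\theta_1 u+2\theta^2(\theta_1^2+1),
\end{equation}
which is a cubic in $u$. A cubic has at most three real roots, and the sign/monotonicity analysis above produces three positive ones, so the count is exactly three. I would verify that all three roots are positive (using $f>0$ on $(0,\infty)$ together with the intermediate-value placements) and distinct (using $f'(u^*)>1$ to rule out a double root at $u^*$, since a tangential crossing would give $f'=1$ there).

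Finally I would translate the three distinct positive fixed points back into Gibbs measures. Each fixed point $(u_i,1)\in I$ of $F$ is a solution of the system \eqref{eq6}, and by the discussion preceding the theorem each positive solution of \eqref{eq6} corresponds to a distinct translation-invariant limiting Gibbs measure; therefore three fixed points yield three distinct such measures and a phase transition occurs. The main obstacle I anticipate is the rigorous monotonicity/convexity control of $g$ needed to guarantee that the $f'(u^*)>1$ condition truly forces three \emph{real and positive} roots rather than one real root plus a complex pair — i.e., ruling out the degenerate configurations in which the cubic \eqref{cubic} has fewer real roots despite $g'(u^*)>0$. Handling this cleanly may require either an explicit discriminant computation for \eqref{cubic} or a careful argument that the unstable fixed point (those with $f'>1$) cannot be the unique real root when $f$ maps $(0,\infty)$ into a bounded interval.
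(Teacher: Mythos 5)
Your proposal is correct and follows essentially the same route as the paper: the paper's proof is exactly your sign analysis of $g(u)=f(u)-u$, using $f'(u^*)>1$ to get $f(u)<u$ just below and $f(u)>u$ just above $u^*$, then invoking $f(0)>0$ and the boundedness of $f$ with the intermediate value theorem to produce one further fixed point on each side of $u^*$, and finally the correspondence between positive solutions of the fixed-point equation and translation-invariant Gibbs measures. Your additional step of clearing denominators to a cubic to cap the count at exactly three (and your worry about degenerate root configurations) is unnecessary for the theorem as stated --- the interval placements already give three distinct positive fixed points, which is all the paper claims --- but it is a harmless strengthening.
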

\begin{proof}
When $f'(u^*)>1,$ $u^*$ is unstable. So there exists a small neighborhood $(u^*-\varepsilon,u^*+\varepsilon)$ of $u^*$ such that for $u\in(u^*-\varepsilon,u^*)$ $f(u)<u,$ and for $(u^*,u^*+\varepsilon)$ $f(u)>u.$ Since $f(0)>0,$ there exists a solution between $0$ and $u^*.$ Similarly, since $f(+\infty)<+\infty$ there is another solution between $u^*$ and $+\infty.$ Thus, there exist three solutions. Since there exist a bijection between the solution of the Eq. \eqref{eq8} and the translation-invariant limiting Gibbs measures, it follows that there exist three translation-invariant limiting Gibbs measures, which implies the existence of a phase transition. This completes proof.
\end{proof}
\begin{remark}\label{rk2}
Note that the set of parameters which satisfy $f'(u^*)>1$ is not empty, e.g., see also Fig. 2.
\end{remark}
\begin{center}
\includegraphics[width=5cm,height=5cm]{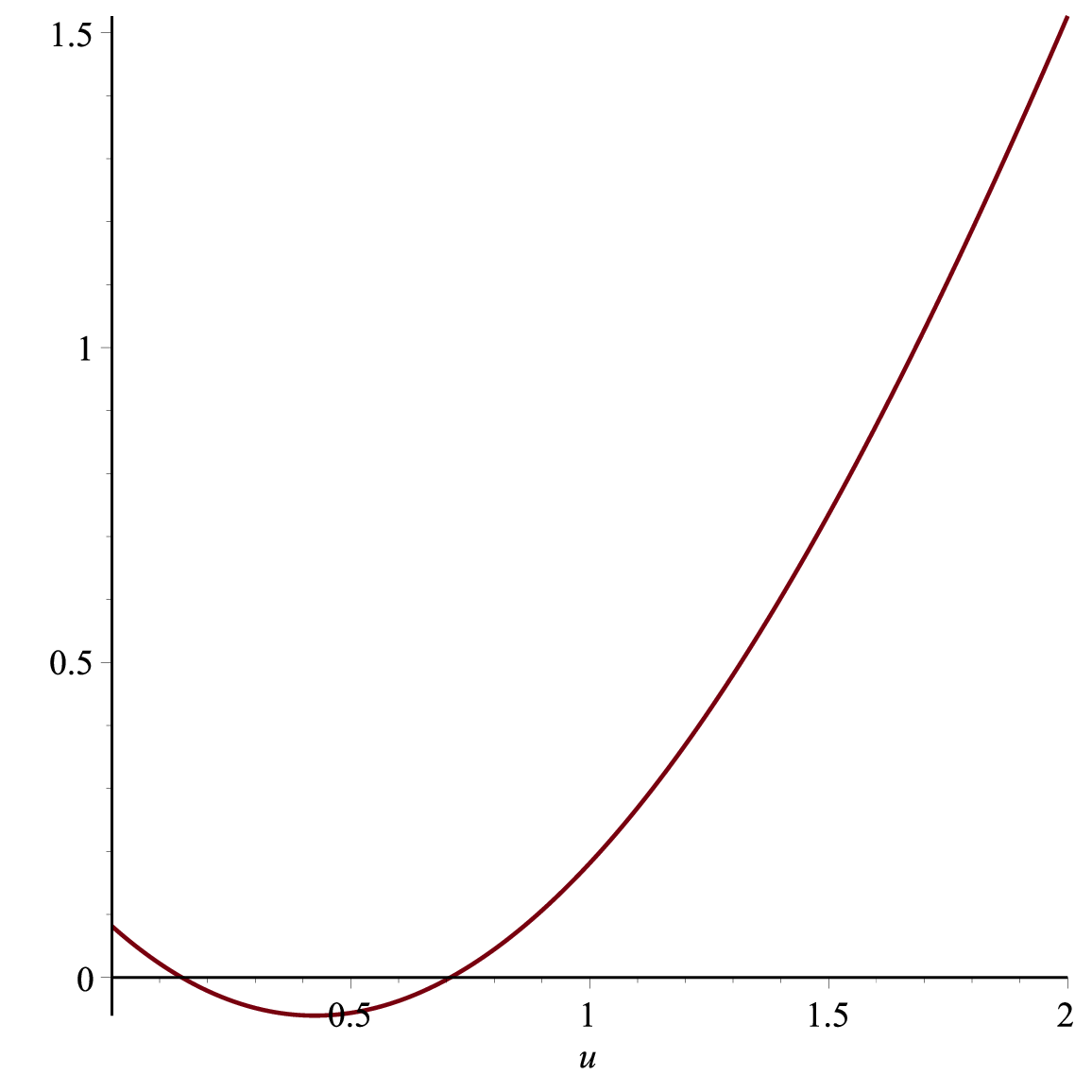}\label{Fig5}
\includegraphics[width=5cm,height=5cm]{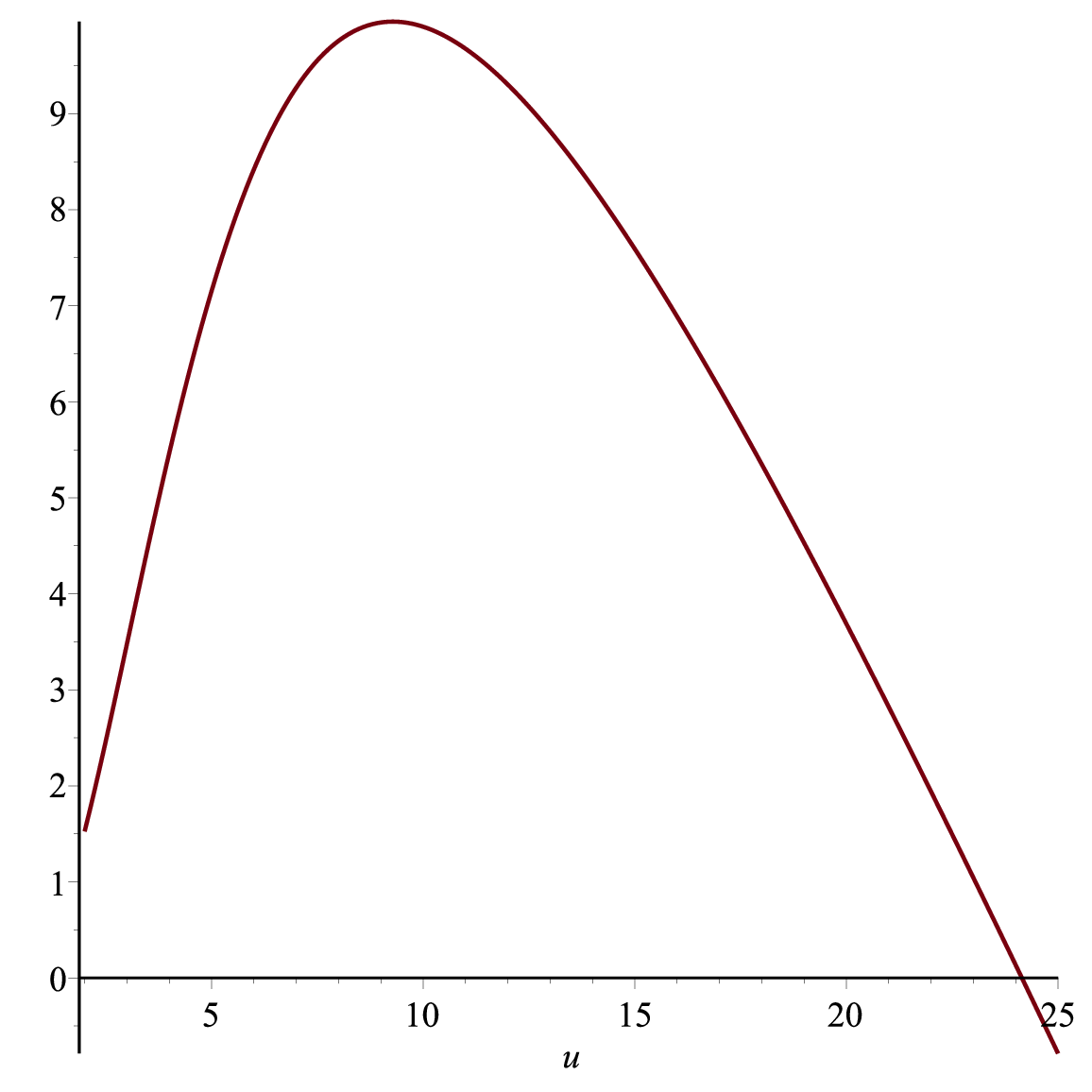}
\end{center}
\begin{center}{\footnotesize \noindent
Figure 2. The plot of $f(u)-u$ when $\theta=0.2,~\theta_1=0.5.$ In this case the function $f$ has three positive fixed points: $\approx{0.1461};~0.7085;~24.1453.$ The plot of the function is drawn for $u\in[0,2],~u\in[2,25]$ separately to show all 3 solutions.}\
\end{center}
\begin{remark}\label{rk11}
In Theorem \ref{thm3} we find the sufficient conditions for the Eq. \eqref{eq8} on possessing multiple solutions, i.e., there might be multiple solutions for the equation even if $f'(u^*)\leq 1.$
\end{remark}
\section{The absence of antiferromagnetic phases}\label{per}

 The fixed points of the transformation $x=F^2(x)$ that do not satisfy $x=F(x)$ define an antiferromagnetic phase, i.e., a phase with period 2. In the section, we study the antiferromagnetic phase of the model on the set $I$ \eqref{I}. Therefore, we investigate the zeroes of the equation given by

 \begin{equation}\label{funct}
\frac{f(f(u))-u}{f(u)-u}=0,
\end{equation}
where $f$ is defined in \eqref{f}.
Simplifying above equation, we obtain
\begin{equation}\label{eq14}
A\,u^2+B\,u+C=0
\end{equation}
where
\[
A:=A(\theta;\theta_1)=\theta^6+2\theta^4\theta_1^2+2\theta^3\theta_1+\theta^2+2\theta\theta_1+1,
\]
\[
B:=B(\theta;\theta_1)=2\theta^7\theta_1+4\theta^5\theta_1^3+2\theta^5\theta_1+6\theta^4\theta_1^2+4\theta^3\theta_1^3-\theta^4+2\theta^3\theta_1+10\theta^2\theta_1^2+6\theta\theta_1+1,
\]
\[
C:=C(\theta;\theta_1)=\theta^8+4\theta^6\theta_2+4\theta^4\theta_1^4+4\theta^5\theta_1+8\theta^3\theta_1^3+2\theta^4+6\theta^2\theta_1^2+2\theta^2+4\theta\theta_1+1.
\]
Note that $A>0,C>0$ for any $\theta>0,\theta_1>0.$ According to Descartes' Rule of Signs (see, e.g., \cite{PP}, Corollary 1) if $B\geq0$ then the equation \eqref{eq14} does not have any positive solution (see Fig. 3). Thus, we have the following assertion:

\begin{theorem}\label{thm45}
If \[
(\theta, \theta_1)\in\{(\theta, \theta_1)\in\mathbb{R}_+^2: B\geq0\}\]
then for the SOS model with one-level next-nearest-neighbour interactions on the binary tree there is no antiferromagnetic phases (two-periodic Gibbs measures) on the set $I$ \eqref{I}.
\end{theorem}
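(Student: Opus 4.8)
The plan is to reduce the existence of an antiferromagnetic phase on $I$ to the unsolvability of the single quadratic \eqref{eq14} in the positive reals, and then to rule that out by a sign argument. Recall that on the invariant set $I$ defined in \eqref{I} the map $F$ collapses to the scalar map $u\mapsto f(u)$ with $f$ as in \eqref{f}, so a two-periodic Gibbs measure on $I$ corresponds precisely to a genuine period-$2$ point of $f$, that is, a positive $u$ with $f(f(u))=u$ but $f(u)\neq u$. Such points are exactly the positive zeroes of the quotient in \eqref{funct}, since dividing $f(f(u))-u$ by $f(u)-u$ removes the fixed points (which always solve $f(f(u))=u$) and retains only the genuine $2$-cycles. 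After clearing denominators this quotient reduces to the quadratic $Au^{2}+Bu+C$ of \eqref{eq14}, so it suffices to show that \eqref{eq14} has no root in $(0,+\infty)$ under the hypothesis $B\geq 0$.

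First I would invoke the facts, already recorded just before the statement, that $A=A(\theta;\theta_1)>0$ and $C=C(\theta;\theta_1)>0$ for every $\theta>0,\ \theta_1>0$. Under the assumption $B\geq 0$ the coefficient string $(A,B,C)$ then exhibits no change of sign, so Descartes' Rule of Signs forces the number of positive roots of \eqref{eq14} to be zero. Equivalently, and without appealing to Descartes, for every $u>0$ one has $Au^{2}+Bu+C\geq Au^{2}+C>0$, so the left-hand side of \eqref{eq14} never vanishes on the positive axis. Hence \eqref{funct} has no positive solution, and consequently $f$ has no genuine period-$2$ point on $I$. To finish, I would translate this back through the correspondence used in Theorem \ref{thm3}: positive solutions of the reduced equation are in bijection with the translation-invariant measures supported on $I$, and genuine $2$-cycles correspond to the period-$2$ (antiferromagnetic) phases, so their absence is exactly the claimed conclusion on the region $\{B\geq 0\}$.

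The main obstacle is not this final sign analysis, which is elementary once $A,C>0$ and $B\geq 0$, but the algebra preceding the statement: verifying that dividing $f(f(u))-u$ by $f(u)-u$ is legitimate, i.e. that every fixed point is genuinely cancelled and that no period-$2$ point is spuriously created (at a zero of the cleared denominator) or lost. Here I would lean on a degree count to confirm the reduction is clean: since $f$ is a ratio of quadratics, $f\circ f$ is a degree-four rational map, so the numerator of $f(f(u))-u$ has degree five, while the fixed-point factor (numerator of $f(u)-u$) has degree three, leaving exactly the quadratic \eqref{eq14}. Checking that the five roots split as three fixed points plus one $2$-cycle, and that no positive root hides in a common factor of numerator and denominator, is the delicate bookkeeping; granting the simplification to \eqref{eq14} with the stated $A,B,C$, the remainder of the argument is immediate.
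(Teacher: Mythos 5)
Your proposal is correct and follows essentially the same route as the paper: reduce the period-$2$ condition on $I$ to the quadratic \eqref{eq14} via the quotient \eqref{funct}, note $A>0$, $C>0$, and conclude via Descartes' Rule of Signs that $B\geq 0$ excludes positive roots. Your direct observation that $Au^{2}+Bu+C\geq Au^{2}+C>0$ for $u>0$ is a nice elementary substitute for Descartes, but it is the same argument in substance.
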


\begin{center}
\includegraphics[width=10cm,height=7cm]{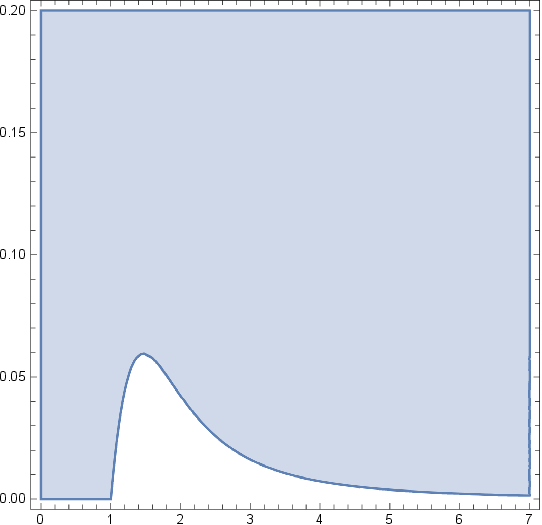}\label{Fig9}
\end{center}
\begin{center}{\footnotesize \noindent
Figure 3. The plot of $B(\theta,\theta_1)$ for $\theta\in (0,7)$ and $\theta_1\in(0,0.2)$ The shaded area corresponds to $B(\theta,\theta_1)\geq 0$.}\
\end{center}

Due to Theorem \ref{thm45}, we should consider the case $B<0.$ If $B<0$ then the equation \eqref{eq14} might have 2 positive solutions. We find the discriminant of the Eq. \eqref{eq14}:
\[
D:=D(\theta;\theta_1)=B^2-4\,A\,C.
\]
It is easy to see that if $B<0$ and $D\geq0$ then the Eq \eqref{eq14} has at least positive solution. However, a computer analysis shows that the set
\[
\mathcal{S}=\{(\theta,\theta_1)\in \mathbb{R}_{+}^2:~D\geq0,~B<0\}
\]
is empty.
Summarising, we make

\textbf{Conjecture 1.} \textit{The SOS model with one-level next-nearest-neighbour interactions on the binary tree does not have any antiferromagnetic phase (two-periodic Gibbs measures) on the set $I$ \eqref{I}.}

\begin{remark}\label{rmk4}
Note that for the model \eqref{eq1} there might be antiferromagnetic phases outside of the set $I$ \eqref{I}.
\end{remark}

\section{Declarations}
\subsection{Author contributions} Conceptualization: M.~M.~Rahmatullaev; Methodology: M.~M.~Rahmatullaev; Formal analysis and investigation: O.~Sh.~Karshiboev; Writing - original draft preparation: O.~Sh.~Karshiboev; Writing - review and editing: O.~Sh.~Karshiboev; Resources: M.~M.~Rahmatullaev, Supervision: M.~M.~Rahmatullaev.
\subsection{Funding} No funds, grants, or other support was received
\subsection{Competing interests} The authors have no relevant financial or non-financial interests to disclose
\subsection{Data Availability Statement} Not applicable

\end{document}